\newcommand{\cmin}{\bar{c}}
\newcommand{\rot}{\textsc{rot}}
\newcommand{\suf}[2]{#1[#2 .. |#1| - 1]}
\newcounter{instr}
\newcommand{\ninstr}{\refstepcounter{instr}\theinstr.}
\title{Alternative Algorithms for Lyndon Factorization\thanks{Supported by the Academy of Finland (grant 134287).}}
\author{Sukhpal Singh Ghuman\inst{1}\and Emanuele Giaquinta\inst{2}\and Jorma Tarhio\inst{1}}
\institute{
 Department of Computer Science and Engineering, Aalto University\\P.O.B.\ 15400, FI-00076 Aalto, Finland\\
  \mbox{ \email{\{Sukhpal.Ghuman,Jorma.Tarhio\}@aalto.fi}} \\ \and
Department of Computer Science, P.O.B.\ 68, FI-00014  University of Helsinki, Finland \\
  \email{Emanuele.Giaquinta@cs.helsinki.fi} }
\begin{document}

\maketitle

\begin{abstract}
  We present two variations of Duval's algorithm for computing the Lyndon factorization of a word. The first algorithm is designed for the case of small alphabets and is able to skip a significant portion of the characters of the string, for strings containing runs of the smallest character in the alphabet. Experimental results show that it is faster than Duval's original algorithm, more than ten times in the  case of long DNA strings. The second algorithm computes, given a run-length encoded string $R$ of length $\rho$, the Lyndon factorization of $R$ in $O(\rho)$ time and constant space.
\end{abstract}

\section{Introduction}

Given two strings $w$ and $w'$, $w'$ is a rotation of $w$ if $w = uv$
and $w' = vu$, for some strings $u$ and $v$. A string is a Lyndon word
if it is lexicographically smaller than all its proper rotations.
Every string has a unique factorization in Lyndon words such that the
corresponding sequence of factors is nonincreasing with respect to
lexicographical order. This factorization was introduced by Chen, Fox
and Lyndon~\cite{ChenFL58}. Duval's classical algorithm \cite{Duval83}
computes the factorization in linear time and constant space. The
Lyndon factorization is a key ingredient in a recent method for
sorting the suffixes of a text~\cite{MantaciRRS13}, which is a
fundamental step in the construction of the Burrows-Wheeler transform
and of the suffix array, as well as in the bijective variant of the
Burrows-Wheeler
transform~\cite{DBLP:journals/corr/abs-1201-3077}~\cite{Kufleitner09}.
The Burrows-Wheeler transform is an invertible transformation of a
string, based on the sorting of its rotations, while the suffix array
is a lexicographically sorted array of the suffixes of a string. They
are the basis for important data compression methods and text indexes.
Although Duval's algorithm runs in linear time and is thus efficient, it
can still be useful to further improve the time for the computation of
the Lyndon factorization in the cases where the string is either huge
or compressible and given in a compressed form.

Various alternative algorithms for the Lyndon factorization have been
proposed in the last twenty years. Apostolico and
Crochemore presented a parallel algorithm~\cite{ApostolicoC95}, while
Roh \emph{et al.} described an external memory algorithm~\cite{RohCIP08}.
Recently, I \emph{et al.} showed how to compute the Lyndon factorization of a
string given in grammar-compressed form and in Lempel-Ziv 78
encoding~\cite{spire/INIBT13}.

In this paper, we present two variations of Duval's algorithm. The
first variation is designed for the case of small alphabets like the
DNA alphabet \{a, c, g, t\}. If the string contains runs of the
smallest character, the algorithm is able to skip a significant
portion of the characters of the string. In our experiments, the new
algorithm is more than ten times faster than the original one for long DNA
strings.

The second variation is for strings compressed with run-length
encoding. The run-length encoding of a string is a simple encoding
where each maximal consecutive sequence of the same symbol is encoded as
a pair consisting of the symbol plus the length of the sequence. Given
a run-length encoded string $R$ of length $\rho$, our algorithm
computes the Lyndon factorization of $R$ in $O(\rho)$ time and uses
constant space. It is thus preferable to Duval's algorithm in the
cases in which the strings are stored or maintained in run-length
encoding.

\section{Basic definitions}

Let $\Sigma$ be a finite ordered alphabet of symbols and let
$\Sigma^*$ be the set of words (strings) over $\Sigma$ ordered by
lexicographic order. The empty word $\varepsilon$ is a word of length
$0$. Let also $\Sigma^+$ be equal to $\Sigma^*\setminus \{ \varepsilon
\}$. Given a word $w$, we denote with $|w|$ the length of $w$ and with
$w[i]$ the $i$-th symbol of $w$, for $0\le i < |w|$. The concatenation
of two words $u$ and $v$ is denoted by $uv$. Given two words $u$ and
$v$, $v$ is a substring of $u$ if there are indices $0\le i,j < |u|$
such that $v = u[i] ... u[j]$. If $i = 0$ ($j = |u| - 1$) then $v$ is
a prefix (suffix) of $u$. We denote by $u[i .. j]$ the substring of
$u$ starting at position $i$ and ending at position $j$. For $i > j$
$u[i .. j] = \varepsilon$. We denote by $u^k$ the concatenation of $k$
$u$'s, for $u\in\Sigma^+$ and $k\ge 1$. The longest border of a word
$w$, denoted with $\beta(w)$, is the longest proper prefix of $w$ which is also a suffix of $w$.
Let $lcp(w, w')$ denote the length of the longest common prefix of words $w$ and $w'$.
We write $w < w'$ if either $lcp(w, w') = |w| < |w'|$, i.e., if $w$ is a proper prefix of $w'$, or if $w[lcp(w, w')] < w'[lcp(w, w')]$.
For any $0\le i < |w|$, $\rot(w, i) = w[i .. |w| - 1] w[0 .. i-1]$ is a rotation of $w$.
A Lyndon word is a word $w$ such that $w < \rot(w, i)$, for $1 \le i <
|w|$. Given a Lyndon word $w$, the following properties hold:
\begin{enumerate}
\item $|\beta(w)| = 0$;
\item either $|w| = 1$ or $w[0] < w[|w| - 1]$.
\end{enumerate}
Both properties imply that no word $a^k$, for $a\in\Sigma$,
$k\ge 2$, is a Lyndon word. The following result is due to Chen, Fox and Lyndon~\cite{cow}:
\begin{theorem}
  Any word $w$ admits a unique factorization $CFL(w) = w_1, w_2, \ldots, w_m$,
  such that $w_i$ is a Lyndon word, for $1\le i\le m$, and $w_1\ge
  w_2\ge \ldots \ge w_m$.
\end{theorem}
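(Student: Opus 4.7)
The plan is to handle existence and uniqueness in turn; both reduce to simple lemmas about how a Lyndon word compares to its own prefixes and suffixes and to other Lyndon words.

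\emph{Existence.} I would start from the trivial factorization $w = w[0]\, w[1]\cdots w[|w|-1]$, in which each single character is a Lyndon word, and then repeatedly replace any two adjacent factors $u, v$ with $u<v$ by the single factor $uv$. The key auxiliary lemma is that if $u$ and $v$ are Lyndon with $u<v$, then $uv$ is Lyndon; I would prove this by a short case analysis on where a nontrivial rotation of $uv$ splits (inside $u$, exactly at the boundary, or inside $v$), in each case invoking the Lyndon property of $u$ or $v$ together with $u<v$. Termination of the merging process is immediate because each merge strictly decreases the number of factors; at termination no adjacent pair satisfies $w_i < w_{i+1}$, yielding $w_1 \ge w_2 \ge \cdots \ge w_m$.

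\emph{Uniqueness.} I would induct on $|w|$. Suppose $w = w_1\cdots w_m = w'_1\cdots w'_k$ are two nonincreasing Lyndon factorizations; assume without loss of generality $|w_1| \ge |w'_1|$. Showing $w_1 = w'_1$ lets the induction hypothesis dispose of $w_2\cdots w_m = w'_2\cdots w'_k$. So suppose $|w_1| > |w'_1|$ and let $j\ge 2$ be minimal with $|w'_1\cdots w'_j| \ge |w_1|$; the suffix $\beta$ of $w_1$ that starts where $w'_j$ starts inside $w$ is a nonempty prefix of $w'_j$. I would then rely on two facts about any Lyndon word $L$: (i) a proper prefix of $L$ is strictly smaller than $L$, which is immediate from the definition of $<$ in the excerpt, and (ii) a nonempty proper suffix of $L$ is strictly greater than $L$. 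These yield $\beta > w_1$ (from (ii) on $w_1$), $\beta \le w'_j$ (from (i) on $w'_j$, with equality if $\beta = w'_j$), and $w'_1 < w_1$ (from (i) on $w_1$); combined with the nonincreasingness $w'_1 \ge w'_j$, one obtains the forbidden chain $\beta > w_1 > w'_1 \ge w'_j \ge \beta$.

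The main technical obstacle is fact (ii). I would prove it by comparing a Lyndon word $L = uv$ (with $u,v$ nonempty) against $v$: the case $v = L$ is impossible since $u\ne\varepsilon$; if $v$ were a proper prefix of $L$, then $v$ would be simultaneously a prefix and a nonempty suffix of $L$, contradicting property 1 of the excerpt; and if a character mismatch at some position $q < |v|$ gave $v < L$, then since $(\rot(L,|u|))[i] = v[i]$ for $i<|v|$ the same mismatch would give $\rot(L,|u|) < L$, contradicting the Lyndon property of $L$. The existence-side lemma that $uv$ is Lyndon whenever $u<v$ are both Lyndon is routine by comparison.
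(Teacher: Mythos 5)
Your argument is correct, but there is nothing in the paper to compare it against: the statement is the Chen--Fox--Lyndon theorem, which the paper records as a known result with a citation and does not prove. Judged on its own, your proof is the standard one and it is sound. For existence, the merge argument works because the invariant ``every factor is Lyndon'' is preserved by each merge (via the lemma that $u<v$ with $u,v$ Lyndon implies $uv$ Lyndon) and the factor count strictly decreases, so the process halts in a nonincreasing sequence. For uniqueness, the chain $\beta > w_1 > w'_1 \ge w'_j \ge \beta$ is exactly right, and your derivation of fact (ii) --- ruling out $v<L$ by observing that the prefix case would create a nonempty border (contradicting property 1) and that a mismatch case would transfer to the rotation $\rot(L,|u|)=vu$ (contradicting the Lyndon property) --- is complete. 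The only place where you lean on unstated detail is the merging lemma itself: in the case where the rotation splits inside $v$, and in the boundary case when $u$ is a proper prefix of $v$, the clean way to finish is to first establish $uv<v$ (which itself uses fact (ii) applied to $v$) and then chain with $v<\suf{v}{i}$; so fact (ii) is really needed on the existence side too, not only for uniqueness. That is a presentational point, not a gap.
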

The run-length encoding (RLE) of a word $w$, denoted by $\textsc{rle}(w)$,
is a sequence of pairs (runs) $\langle (c_1, l_1), (c_2, l_2,), \ldots, (c_{\rho},
l_{\rho}) \rangle$ such that $c_i\in \Sigma$, $l_i\ge 1$, $c_i\neq c_{i+1}$ for
$1 \le i < r$, and $w = c_1^{l_1} c_2^{l_2} \ldots c_{\rho}^{l_{\rho}}$. The
interval of positions in $w$ of the factor $w_i$ in the Lyndon
factorization of $w$ is $[a_i, b_i]$, where
$a_i=\sum_{j=1}^{i-1}|w_j|$, $b_i=\sum_{j=1}^{i}|w_j| - 1$. Similarly, the
interval of positions in $w$ of the run $(c_i, l_i)$ is $[a^{rle}_i, b^{rle}_i]$
where $a^{rle}_i=\sum_{j=1}^{i-1} l_j$, $b^{rle}_i=\sum_{j=1}^i l_j - 1$.

\section{Duval's algorithm}

\setcounter{instr}{0}
\begin{figure}[t]
\begin{center}
\begin{tabular}{|rl|}
\hline
\multicolumn{2}{|l|}{\textsc{LF-Duval}$(w)$}\\
\ninstr & $k\leftarrow 0$ \\
\ninstr & \textbf{while} $k < |w|$ \textbf{do} \\
\ninstr & \qquad \qquad $i\leftarrow k + 1$ \\
\ninstr & \qquad \qquad $j\leftarrow k + 2$ \\
\ninstr & \qquad \qquad \textbf{while} \textsc{true} \textbf{do} \\
\ninstr & \qquad \qquad \qquad \textbf{if} $j = |w| + 1$ \textbf{or} $w[j-1] < w[i-1]$ \textbf{then} \\
\ninstr & \qquad \qquad \qquad \qquad \textbf{while} $k < i$ \textbf{do} \\
\ninstr & \qquad \qquad \qquad \qquad \qquad \textbf{output}($w[k .. k + j - i]$) \\
\ninstr & \qquad \qquad \qquad \qquad \qquad $k\leftarrow k + j - i$ \\
\ninstr & \qquad \qquad \qquad \qquad \textbf{break} \\
\ninstr & \qquad \qquad \qquad \textbf{else} \\
\ninstr & \qquad \qquad \qquad \qquad \textbf{if} $w[j-1] > w[i-1]$ \textbf{then} \\
\ninstr & \qquad \qquad \qquad \qquad \qquad $i\leftarrow k + 1$ \\
\ninstr & \qquad \qquad \qquad \qquad \textbf{else} \\
\ninstr & \qquad \qquad \qquad \qquad \qquad $i\leftarrow i + 1$ \\
\ninstr & \qquad \qquad \qquad \qquad $j\leftarrow j + 1$ \\
\hline
\end{tabular}
\end{center}
\caption{Duval's algorithm to compute the Lyndon factorization of a string.}
\label{fig:duval}
\end{figure}

In this section we briefly describe Duval's algorithm for the
computation of the Lyndon factorization of a word. Let $L$ be the set
of Lyndon words and let
$$
P = \{ w\ |\ w\in\Sigma^+ \text{ and } w\Sigma^*\cap L\neq \emptyset \}\,,
$$ be the set of nonempty prefixes of Lyndon words. Let also $P' =
P\cup \{ c^k\ |\ k\ge 2\}$, where $c$ is the maximum symbol in
$\Sigma$. Duval's algorithm is based on the following Lemmas, proved
in~\cite{Duval83}:
\begin{lemma}\label{lemma:lf-duval1}
  Let $w\in \Sigma^+$ and $w_1$ be the longest prefix of
  $w = w_1 w'$ which is in $L$. We have $CFL(w) = w_1 CFL(w')$.
\end{lemma}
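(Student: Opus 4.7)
The plan is to proceed by induction on $|w|$. The base case $|w| = 1$ is trivial: $w$ is itself a Lyndon word, so $w_1 = w$, $w' = \varepsilon$, and $CFL(w) = w_1$ coincides with $w_1\, CFL(w')$.

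For the inductive step, if $w' = \varepsilon$ there is nothing to show. Otherwise $|w'| < |w|$, so applying the inductive hypothesis to $w'$ yields $CFL(w') = g_1, g_2, \ldots, g_{m'}$ for some Lyndon words satisfying $g_1 \ge g_2 \ge \ldots \ge g_{m'}$. The sequence $w_1, g_1, \ldots, g_{m'}$ is then a factorization of $w$ into elements of $L$, and $g_1 \ge \ldots \ge g_{m'}$ is already in place; by the uniqueness part of the Chen--Fox--Lyndon theorem quoted just above, it will suffice to verify the single remaining inequality $w_1 \ge g_1$ in order to conclude that this sequence is exactly $CFL(w)$, i.e., that $CFL(w) = w_1\, CFL(w')$.

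The inequality $w_1 \ge g_1$ will be argued by contradiction, using the classical concatenation property of Lyndon words: if $u, v \in L$ and $u < v$, then $uv \in L$ (see, e.g., Lothaire). Suppose instead $w_1 < g_1$. Since $w_1, g_1 \in L$, the concatenation property gives $w_1 g_1 \in L$, and since $g_1$ is a prefix of $w'$, the word $w_1 g_1$ is a prefix of $w$ strictly longer than $w_1$ that lies in $L$, contradicting the maximality of $w_1$.

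I do not anticipate any real obstacle: the whole argument is a short induction whose only non-routine input is the concatenation property cited above, which I would invoke rather than reprove. The most delicate point is simply to remember that the maximality of $w_1$ has to be exploited against an \emph{arbitrarily long} Lyndon prefix of $w$, and this is precisely what the concatenation step produces out of the hypothetical violation $w_1 < g_1$.
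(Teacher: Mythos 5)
The paper does not actually prove this lemma---it is quoted as one of the results ``proved in Duval's paper'' and used as a black box---so there is no in-paper proof to compare against. Your argument is correct and is essentially the standard one: take the (existing, by the Chen--Fox--Lyndon theorem) factorization $CFL(w') = g_1 \ge \dots \ge g_{m'}$, observe that $w_1, g_1, \dots, g_{m'}$ is a factorization of $w$ into Lyndon words, and reduce everything to the single inequality $w_1 \ge g_1$, which follows from maximality of $w_1$ via the concatenation property ($u, v \in L$ and $u < v$ imply $uv \in L$). That property is genuinely external input---it is not stated anywhere in this paper---so you are right to cite it rather than reprove it, but be aware the argument does lean on it entirely. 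Two small points of hygiene: the induction on $|w|$ is superfluous, since the existence of $CFL(w')$ comes from the theorem, not from the inductive hypothesis (your inductive hypothesis would tell you how $CFL(w')$ decomposes relative to the longest Lyndon prefix of $w'$, which you never use); and in the contradiction step you should note explicitly that $g_1$ is a prefix of $w'$ because it is the first factor of $CFL(w')$, so that $w_1 g_1$ is indeed a prefix of $w$. Neither issue is a gap.
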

\begin{lemma}\label{lemma:lf-duval2}
$P' = \{ (uv)^k u\ |\ u\in\Sigma^*, v\in\Sigma^+, k\ge 1 \text{ and } uv\in L \}$.
\end{lemma}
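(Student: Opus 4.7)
The plan is to prove the two inclusions separately. Write $Q$ for the set on the right-hand side of the claimed equality. For $Q\subseteq P'$: take $(uv)^k u$ with $uv\in L$, $v\in\Sigma^+$, $k\ge 1$. If $uv$ equals the maximum letter $c$, then $u=\varepsilon$ and $v=c$, so $(uv)^k u=c^k\in P'$ directly. Otherwise the first letter of the Lyndon word $uv$ (which is its minimum) is strictly less than $c$, so $uv<c$ in lex order, and the classical concatenation rule for Lyndon words ($x,y\in L$ and $x<y$ imply $xy\in L$), applied inductively, yields $(uv)^{k+1}c\in L$; since $(uv)^k u$ is a prefix of $(uv)^{k+1}c$, it lies in $P\subseteq P'$.

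For the converse $P'\subseteq Q$, the case $w=c^k$ with $k\ge 2$ is handled directly by $u=\varepsilon$, $v=c$. Otherwise $w\in P$, so $wz\in L$ for some $z$. I let $\ell$ denote the longest Lyndon prefix of $w$ (well-defined with $|\ell|\ge 1$ since $w[0]\in L$) and claim that $w$ has period $|\ell|$; granting this, $w=\ell^q\ell''$ for some $q\ge 1$ and some proper (possibly empty) prefix $\ell''$ of $\ell$, and the decomposition $u:=\ell''$, $v:=\ell[|\ell''|..|\ell|-1]$, $k:=q$ satisfies $uv=\ell\in L$ and $v\in\Sigma^+$. To prove the period claim I argue by contradiction: assuming $w$ lacks period $|\ell|$, let $i$ be the smallest index with $w[i]\ne\ell[i\bmod|\ell|]$ and write $i=q|\ell|+r$ with $q\ge 1$, $0\le r<|\ell|$. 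Because the first letter of the Lyndon word $wz$ is the minimum of the whole string, $w[i]\ge\ell[0]$, so either $w[i]<\ell[r]$ or $w[i]>\ell[r]$. The former is quick: the proper suffix of $wz$ at position $q|\ell|$ begins with $\ell[0..r-1]w[i]$, while $wz$ begins with $\ell[0..r-1]\ell[r]$, and $\ell[r]>w[i]$ makes $wz$ strictly larger at position $r$, contradicting $wz\in L$.

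The main obstacle is the case $w[i]>\ell[r]$: here I must verify that the prefix $y:=\ell^q\ell[0..r-1]w[i]$ of $w$ is itself Lyndon, contradicting the maximality of $\ell$. This is done by checking $y$ against each of its proper nonempty suffixes. Suffixes starting inside some full copy of $\ell$, and the length-one suffix $w[i]$, are handled routinely by the Lyndon property of $\ell$ together with $w[i]>\ell[r]\ge\ell[0]$. The delicate subcase is a suffix of the form $\ell[s..r-1]w[i]$ whose comparison with $y$ agrees through a local period of $\ell$ on $\ell[0..r-1]$; to finish, I plan to invoke two structural facts about Lyndon words: they have no nontrivial border, and whenever $\ell\in L$ has $\ell[0..n-1]$ of period $p$ with $\ell[n]\ne\ell[n-p]$, one actually has $\ell[n]>\ell[n-p]$ (immediate from $\ell<\ell[p..|\ell|-1]$). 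Together with $w[i]>\ell[r]$, these force the comparison at the critical position to favour $y$, completing the verification.
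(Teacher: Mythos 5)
The paper does not prove this lemma at all: it is quoted as one of the results ``proved in~\cite{Duval83}'', so there is no in-paper argument to compare against. Judged on its own, your reconstruction is correct and is essentially Duval's original characterization. Both inclusions work. For $Q\subseteq P'$, the inductive use of the concatenation rule ($x,y\in L$, $x<y$ $\Rightarrow$ $xy\in L$) with $y=(uv)^jc$ is sound because $uv$ is always a proper prefix of $(uv)^jc$; note only that when $uv=c$ and $k=1$ the word is $c$ itself, which lands in $P$ (as $c\in L$) rather than in the adjoined set $\{c^k\mid k\ge 2\}$ -- worth one clause. For $P'\subseteq Q$, the reduction to ``$w$ has period $|\ell|$ where $\ell$ is the longest Lyndon prefix'' is exactly the right invariant, and your two-case analysis at the first period violation is correct: the case $w[i]<\ell[r]$ genuinely contradicts $wz\in L$ via the suffix at position $q|\ell|$, and in the case $w[i]>\ell[r]$ the two structural facts you invoke (borderlessness, and $\ell[n]\ge\ell[n-p]$ whenever $\ell[0..n-1]$ has period $p$, which follows from $\ell<\suf{\ell}{p}$) do suffice to settle the delicate suffixes $\ell[s..r-1]w[i]$: if the local period forces agreement through position $r-s-1$, then $\ell[r-s]\le\ell[r]<w[i]$ closes the comparison in favour of $y$. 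So the sketch is not hiding a gap, but the final subcase is the entire content of the lemma's hard direction (it is in effect proposition 2 of Lemma~\ref{lemma:lf-duval3}) and should be written out in full rather than left as a plan. One presentational remark: the forward direction leans on the classical concatenation rule, which is itself a nontrivial fact not stated anywhere in the paper, so if you want the proof self-contained you would need to prove that too.
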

\begin{lemma}\label{lemma:lf-duval3}
Let $w = (uav')^ku$, with $u,v'\in\Sigma^*$, $a\in\Sigma$, $k\ge 1$ and $uav'\in L$. The following propositions hold:
\begin{enumerate}
\item For $a'\in\Sigma$ and $a > a'$, $wa'\notin P'$;
\item For $a'\in\Sigma$ and $a < a'$, $wa'\in L$;
\item For $a' = a$, $wa'\in P'\setminus L$.
\end{enumerate}
\end{lemma}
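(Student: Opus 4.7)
My plan is to prove the three items using Lemma~\ref{lemma:lf-duval2} for membership in $P'$ together with the standard fact that a word $y$ is a Lyndon word iff $y < s$ for every nonempty proper suffix $s$ of $y$. Item~3 is immediate: when $a' = a$ we have $wa' = (uav')^k\,ua$, which fits the form of Lemma~\ref{lemma:lf-duval2} by taking $u'' := ua$ and $v'' := v'$ when $v' \ne \varepsilon$, or $u'' := \varepsilon$, $v'' := ua$ with exponent $k+1$ when $v' = \varepsilon$; in either case $u''v'' = uav' \in L$, so $wa' \in P'$. Moreover $ua$ is a proper nonempty border of $wa'$, so $|\beta(wa')| \ge 1$ and property~1 of Lyndon words excludes $wa' \in L$.

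For item~1 I would argue by contradiction. The subcase $wa' = c^k$ of $P'$ is ruled out because $a' < a$ prevents all symbols of $wa'$ from equalling the maximum $c$; so if $wa' \in P'$ then there would exist $z$ with $wa'z \in L$. The proper nonempty suffix $s := u a' z$ of $wa'z$, which starts at position $k|uav'| \ge 1$, shares the prefix $u$ with $wa'z$ and then differs, with $wa'z$ showing $a$ and $s$ showing $a'$; since $a > a'$ this gives $wa'z > s$, contradicting $wa'z \in L$.

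For item~2 I verify $wa' < s$ for every nonempty proper suffix $s$ of $wa' = (uav')^k u a'$, splitting on the starting position $p$. When $p = j|uav'|$ with $1 \le j \le k$, the lcp with $wa'$ is $(uav')^{k-j}u$ and the next characters are $a < a'$; when $p$ lies strictly inside a copy of $uav'$, the Lyndon property of $uav'$ gives $uav' <$ its corresponding proper suffix, and the distinguishing index falls within the first $|uav'|$ characters of $wa'$, forcing $wa' < s$. When $p = k|uav'|$ the suffix $s = u a'$ agrees with $wa'$ on the prefix $u$ and then differs at $a < a'$. The remaining subcases are $p = k|uav'| + i$ with $1 \le i \le |u|$, where $s = u[i..|u|-1]\,a'$ (reducing to $s = a'$ when $i = |u|$).

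The main obstacle is this last family of subcases. It suffices to prove $uav' < s$, because $wa'$ begins with $uav'$ and $|s| \le |u| < |uav'|$. To obtain $uav' < s$ I would introduce the auxiliary word $t := u[i..|u|-1]\,a\,v'$, a proper nonempty suffix of $uav'$. By the Lyndon property $uav' < t$, and $s$ and $t$ agree on positions $0, \ldots, |u|-i-1$ but differ at position $|u|-i$ with $t$ showing $a$ and $s$ showing $a'$. A short case split on the first distinguishing index $\ell$ of $uav'$ against $t$ (whether $\ell < |u|-i$, $\ell = |u|-i$, or $\ell > |u|-i$) transfers the inequality to $uav' < s$; the subtlest situation is $\ell > |u|-i$, where the agreement of $uav'$ with $t$ through position $\ell-1$ forces $uav'[|u|-i] = a$, so $uav'[|u|-i] = a < a' = s[|u|-i]$ completes the comparison.
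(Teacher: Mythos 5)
Your proof is correct, but there is nothing in the paper to compare it against: Lemma~\ref{lemma:lf-duval3} is one of the statements the authors explicitly import as ``proved in~\cite{Duval83}'', so the paper contains no proof of its own. Judged on its own merits, your argument is a sound, self-contained reconstruction. The one external ingredient is the characterization of Lyndon words as words smaller than all their nonempty proper suffixes; this differs from the paper's rotation-based definition but is standard, equivalent, and is also what the paper's Lemma~\ref{lemma:lf-lcp} tacitly relies on. All three items check out: in item~3 the two-case invocation of Lemma~\ref{lemma:lf-duval2} (splitting on $v'=\varepsilon$) and the observation that $ua$ is a nonempty proper border of $wa$ are exactly what is needed; in item~1 you correctly dispose of the $c^m$ branch of $P'$ and derive the contradiction from the suffix of $wa'z$ at position $k|uav'|$, where the mismatch $a>a'$ occurs at offset $|u|$; and in item~2 your case split on the starting position $p$ of the suffix is exhaustive (the $p=k|uav'|$ case is in fact the $j=k$ instance of your first case, a harmless redundancy). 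The only place where you work harder than necessary is the last family $s=u[i\,..\,|u|-1]a'$: instead of the three-way split on the mismatch index of $uav'$ against $t=u[i\,..\,|u|-1]av'$, you can simply note that $uav'<t$ by the Lyndon property and $t<s$ because they first differ at position $|u|-i$ with $t$ showing $a$ and $s$ showing $a'>a$, whence $uav'<s$ by transitivity and $wa'<s$ because $wa'$ extends $uav'$ and the mismatch occurs before position $|s|\le|u|<|uav'|$.
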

Lemma~\ref{lemma:lf-duval1} states that the computation of the Lyndon
factorization of a word $w$ can be carried out by computing the
longest prefix $w_1$ of $w = w_1 w'$ which is a Lyndon word and then
recursively restarting the process from $w'$. Lemma~\ref{lemma:lf-duval2} states
that the nonempty prefixes of Lyndon words are all of the form $(uv)^k
u$, where $u\in\Sigma^*, v\in\Sigma^+, k\ge 1 \text{ and } uv\in L$.
By the first property of Lyndon words, the longest prefix of $(uv)^k
u$ which is in $L$ is $uv$. Hence, if we know that $w = (uv)^k u a
v'$, $(uv)^k u\in P'$ but $(uv)^k u a\notin P'$, then by Lemma~\ref{lemma:lf-duval1} and
by induction we have $CFL(w) = w_1 w_2 \ldots w_k\, CFL(u a v')$,
where $w_1 = w_2 = \ldots = w_k = uv$. Finally, Lemma~\ref{lemma:lf-duval3}
explains how to compute, given a word $w\in P'$ and a symbol
$a\in\Sigma$, whether $w a\in P'$, and thus makes it possible to
compute the factorization using a left to right parsing. Note that,
given a word $w\in P'$ with $|\beta(w)| = i$, we have $w[0 ..
  |w|-i-1]\in L$ and $w = (w[0 .. |w|-i-1])^q w[0 .. r-1]$ with $q
=\lfloor \frac{|w|}{|w| - i}\rfloor$ and $r = |w|\mod (|w| - i)$. For
example, if $w = abbabbab$, we have $|w| = 8$, $|\beta(w)| = 5$, $q =
2$, $r = 2$ and $w = (abb)^2 ab$.
The code of Duval's algorithm is shown in Figure~\ref{fig:duval}.

The following is an alternative formulation of Duval's algorithm by I
\emph{et al.}~\cite{spire/INIBT13}:
\begin{lemma}\label{lemma:lf-lcp-skip}
  Let $j > 0$ be any position of a string $w$ such that $w <
  \suf{w}{i}$ for any $0 < i\le j$ and $lcp(w, \suf{w}{j})\ge 1$.
  Then, $w < \suf{w}{k}$ also holds for any $j < k\le j + lcp(w,
  \suf{w}{j})$.
\end{lemma}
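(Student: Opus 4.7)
The plan is to extract a period structure from the hypothesis, establish one clean comparison lemma, and assemble the conclusion by a chain of strict inequalities.

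First, set $\ell=lcp(w,\suf{w}{j})$. Because $w<\suf{w}{j}$ and $j>0$, $\suf{w}{j}$ cannot be a proper extension of $w$, so $\ell<|w|-j$; in particular both positions $\ell$ and $j+\ell$ lie in $w$. The matched prefix gives $w[a]=w[a+j]$ for $0\le a\le\ell-1$, so $w[0..j+\ell-1]$ has period $j$, while the definition of $<$ forces $w[\ell]<w[j+\ell]$. The key claim I would then prove is: for every $d\in[0,\ell]$, $\suf{w}{d}<\suf{w}{j+d}$. This reduces to a direct character-by-character comparison: for $i<\ell-d$ the period yields $\suf{w}{d}[i]=w[d+i]=w[j+d+i]=\suf{w}{j+d}[i]$, and at position $\ell-d$ the two characters are $w[\ell]$ and $w[j+\ell]$, witnessing a strict mismatch. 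Both suffixes have length strictly greater than $\ell-d$, the bound $\ell<|w|-j$ being precisely what makes this mismatch genuine on the longer side.

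To finish, fix $j<k\le j+\ell$ and write $k=j+d$ with $1\le d\le\ell$. If $d\le j$, the hypothesis delivers $w<\suf{w}{d}$ and the key claim delivers $\suf{w}{d}<\suf{w}{k}$, so $w<\suf{w}{k}$. If $d>j$ (only possible when $\ell>j$), let $d^{\ast}\in\{1,\dots,j\}$ satisfy $d\equiv d^{\ast}\pmod{j}$; applying the key claim iteratively along $d^{\ast},d^{\ast}+j,\dots,d,k$ produces the chain $\suf{w}{d^{\ast}}<\suf{w}{d^{\ast}+j}<\dots<\suf{w}{d}<\suf{w}{k}$, and the hypothesis $w<\suf{w}{d^{\ast}}$ closes it on the left.

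The delicate regime is $d>j$, where the hypothesis is silent about $w$ versus $\suf{w}{d}$. The key claim is deliberately stated for \emph{every} $d\le\ell$ rather than only $d\le j$, so that the period structure of $w[0..j+\ell-1]$ can carry the inequality through these extra positions; once that claim is in hand, the descending-by-$j$ chain dispatches this case almost mechanically.
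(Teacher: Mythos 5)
The paper states this lemma without proof, attributing it to I \emph{et al.}~\cite{spire/INIBT13}, so there is no in-paper argument to compare against; judged on its own, your proof is correct. The extraction of the period $j$ on $w[0..j+\ell-1]$ together with the mismatch $w[\ell]<w[j+\ell]$, the key claim $\suf{w}{d}<\suf{w}{j+d}$ for all $d\le\ell$, and the transitive chain all check out, and you are right that the regime $d>j$ (reachable, e.g., for $w=aabaabaabb$ with $j=3$, $\ell=6$) genuinely needs the descending-by-$j$ chain rather than a direct appeal to the hypothesis --- a point a careless argument would miss.
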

\begin{lemma}\label{lemma:lf-lcp}
  Let $w$ be a string with $CFL(w) = w_1, w_2, \ldots, w_m$. It holds
  that $|w_1| = \min \{ j\ |\ \suf{w}{j} < w\}$ and $w_1 = w_2
  = \ldots = w_k = w[0 .. |w_1| - 1]$, where $k = 1 +
  \lfloor lcp(w, \suf{w}{|w_1|}) / |w_1|\rfloor$. 
\end{lemma}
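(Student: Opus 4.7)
The plan is to prove the two statements separately: the minimality characterization and the formula for $k$. For the minimality, I would show $\suf{w}{j} > w$ for every $0 < j < |w_1|$ and $\suf{w}{|w_1|} < w$. The first of these follows from the Lyndon property of $w_1$: since $w_1 < \rot(w_1, j) = w_1[j..|w_1|-1]\,w_1[0..j-1]$ and Lyndon words have no nontrivial border (property~1), the first position of disagreement between $w_1$ and its rotation must lie strictly below $|w_1|-j$. Hence $w_1[0..|w_1|-j-1] < w_1[j..|w_1|-1]$, which forces $w < \suf{w}{j}$ because the two strings agree beyond their first $|w_1|$ and $|w_1|-j$ characters (both continue as $w_2 w_3 \cdots w_m$).

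For $\suf{w}{|w_1|} < w$, I would compare $w_1 w_2 \cdots w_m$ with $w_2 w_3 \cdots w_m$ using $w_1 \geq w_2$ from the factorization. If $w_1$ and $w_2$ disagree within their common length the conclusion is immediate. Otherwise $w_2$ is a proper prefix of $w_1$, say $w_1 = w_2 u$ with $u$ nonempty; here the rotation property $w_2 u < u w_2$ of the Lyndon word $w_1$ combined with its primitivity gives $w_2 < u$, and the comparison reduces to one between $u w_2 w_3 \cdots w_m$ and $w_3 w_4 \cdots w_m$, which I would close by strong induction on $|w|$ using $w_3 \leq w_2 < u$.

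The formula for $k$ follows from a direct calculation. Let $L = lcp(w, \suf{w}{|w_1|})$ and let $k$ be the largest index with $w_1 = w_2 = \cdots = w_k$. Then the first $(k-1)|w_1|$ characters of $\suf{w}{|w_1|}$ are $w_1^{k-1}$, matching those of $w$, so $L \geq (k-1)|w_1|$. If $k < m$, then $w_{k+1} < w_k = w_1$ and the mismatch with $w_1$ occurs strictly within the next block of $|w_1|$ characters, giving $L < k|w_1|$; if $k = m$, then $\suf{w}{|w_1|}$ has length exactly $(m-1)|w_1|$ and $L = (m-1)|w_1|$. Either way $\lfloor L/|w_1|\rfloor = k-1$, so $1 + \lfloor L/|w_1|\rfloor = k$, as claimed.

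The main obstacle is the ``proper prefix'' subcase of the inequality $\suf{w}{|w_1|} < w$. Deriving $w_2 < u$ from $w_1 = w_2 u$ being a Lyndon word takes a small auxiliary argument (using primitivity of $w_1$, or a Fine--Wilf style observation), and the subsequent comparison between $u w_2 w_3 \cdots w_m$ and $w_3 w_4 \cdots w_m$ may again land in a proper-prefix configuration, so a careful induction that exploits the nonincreasing order of the Lyndon factors is required to close all subcases cleanly.
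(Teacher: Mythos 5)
First, a point of reference: the paper does not prove this lemma at all --- it is quoted from I \emph{et al.}~\cite{spire/INIBT13} --- so there is no in-paper argument to compare yours against, and your proposal has to stand on its own. The part you do in full is correct: for $0 < j < |w_1|$, border-freeness of $w_1$ forces the first mismatch between $w_1$ and $\rot(w_1,j)$ to occur strictly before position $|w_1|-j$, and a strict character mismatch there decides $w < \suf{w}{j}$ regardless of the continuations. The arithmetic deriving $k = 1 + \lfloor L/|w_1|\rfloor$ is also structurally right.

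The genuine gap is in the proper-prefix case, and it infects two places. (1) Your derivation of $w_2 < u$ from ``$w_2u < uw_2$ combined with primitivity'' is unsound as stated: $xy<yx$ does not imply $x<y$ even for primitive $xy$ (take $x=aba$, $y=ab$: $abaab<ababa$ but $aba>ab$). The conclusion is nevertheless true, but the correct route is border-freeness: $u$ is a proper suffix of the Lyndon word $w_1$ and cannot be a prefix of it, so $w_1 < u$ with a character mismatch at some position $p<|u|$; this \emph{early-mismatch} form is also the fact you actually need, because the bare inequality $w_2<u$ could hold with $w_2$ a proper prefix of $u$, in which case it decides nothing about $w_3\cdots w_m$ versus $uw_2\cdots w_m$. (2) The ``strong induction on $|w|$'' you invoke does not close this case: the induction hypothesis is the lemma itself, which speaks about $\suf{w'}{|w_1'|}$ versus $w'$, whereas the comparison you are left with, $w_3\cdots w_m$ versus $uw_2\cdots w_m$, is not of that shape, and the proper-prefix configuration can cascade through $w_3, w_4, \ldots$ indefinitely. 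The same unproven assertion reappears in your computation of $k$, where ``the mismatch occurs strictly within the next block of $|w_1|$ characters'' is exactly the claim that $w_1$ is not a prefix of $w_{k+1}\cdots w_m$. To close all subcases at once you need a strengthened invariant proved by its own induction --- for instance: if $CFL(v)=v_1\ge\cdots\ge v_n$ and $\lambda$ is a Lyndon word with $\lambda\ge v_1$, then $v$ is either a prefix of some power $\lambda^j$ or is smaller than $\lambda^j$ at a character mismatch. Combined with the fact that $u$ exceeds $w_1$ at a mismatch before position $|u|$, this settles both the suffix comparison and the $k$ formula. As written, your plan correctly names the obstacle but does not supply the induction that overcomes it.
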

Based on these Lemmas, Duval's algorithm can be implemented by
initializing $j\leftarrow 1$ and executing the following steps until
$w$ becomes $\varepsilon$: 1) compute $h\leftarrow lcp(w,
\suf{w}{j})$. 2) if $j + h < |w|$ and $w[h] < w[j + h]$ set $j\leftarrow j + h + 1$;
otherwise output $w[0 .. j - 1]$ $k$ times and set $w\leftarrow
\suf{w}{jk}$, where $k = 1 + \lfloor h / j\rfloor$, and set
$j\leftarrow 1$.

\section{Improved algorithm for small alphabets}

\setcounter{instr}{0}
\begin{figure}[t]
\begin{center}
\begin{tabular}{|rl|}
\hline
\multicolumn{2}{|l|}{\textsc{LF-skip}$(w)$}\\
\ninstr & $e\leftarrow |w| - 1$ \\
\ninstr & \textbf{while} $e\ge 0$ \textbf{and} $w[e] = \cmin$ \textbf{do} \\
\ninstr & \qquad $e\leftarrow e - 1$ \\
\ninstr & $l\leftarrow |w| - 1 - e$ \\
\ninstr & $w\leftarrow w[0 .. e]$ \\
\ninstr & $s\leftarrow \min Occ_{\{\cmin\cmin\}}(w)\cup \{ |w| \}$ \\
\ninstr & \textsc{LF-Duval}($w[0 .. s-1]$) \\
\ninstr & $r\leftarrow 0$ \\
\ninstr & \textbf{while} $s < |w|$ \textbf{do} \\
\ninstr & \qquad $w\leftarrow w[s .. |w| - 1]$ \\
\ninstr & \qquad \textbf{while} $w[r] = \cmin$ \textbf{do} \\
\ninstr & \qquad \qquad $r\leftarrow r + 1$ \\
\ninstr & \qquad $s\leftarrow |w|$ \\
\ninstr & \qquad $k\leftarrow 1$ \\
\ninstr & \qquad $\mathcal{P}\leftarrow \{ \cmin^r c\ |\ c\le w[r] \}$ \\
\ninstr & \qquad $j\leftarrow 0$ \\
\ninstr & \qquad \textbf{for} $i\in Occ_{\mathcal{P}}(w): i > j$ \textbf{do} \\
\ninstr & \qquad \qquad $h\leftarrow lcp(w, \suf{w}{i})$ \\
\ninstr & \qquad \qquad \textbf{if} $h = |w| - i$ \textbf{or} $w[i + h] < w[h]$ \textbf{then} \\
\ninstr & \qquad \qquad \qquad $s\leftarrow i$ \\
\ninstr & \qquad \qquad \qquad $k\leftarrow 1 + \lfloor h / s\rfloor$ \\
\ninstr & \qquad \qquad \qquad \textbf{break} \\
\ninstr & \qquad \qquad $j\leftarrow i + h$ \\
\ninstr & \qquad \textbf{for} $i\leftarrow 1$ \textbf{to} $k$ \textbf{do} \\
\ninstr & \qquad \qquad \textbf{output}($w[0 .. s - 1]$) \\
\ninstr & \qquad $s\leftarrow s \times k$ \\
\ninstr & \textbf{for} $i\leftarrow 1$ \textbf{to} $l$ \textbf{do} \\
\ninstr & \qquad \textbf{output}($\cmin$) \\
\hline
\end{tabular}
\end{center}
\caption{The algorithm to compute the Lyndon factorization that can potentially skip symbols.}
\label{fig:duval-variant}
\end{figure}

Let $w$ be a word over an alphabet $\Sigma$ with $CFL(w) = w_1, w_2,
\ldots, w_m$ and let $\cmin$ be the smallest symbol in $\Sigma$.
Suppose that there exists $k\ge 2, i\ge 1$ such that $\cmin^k$ is a
prefix of $w_i$. If the last symbol of $w$ is not $\cmin$, then by
Theorem $1$ and by the properties of Lyndon words, $\cmin^k$ is a
prefix of each of $w_{i+1}, w_{i+1}, \ldots, w_{m}$. This property can
be exploited to devise an algorithm for Lyndon factorization that can
potentially skip symbols. Our algorithm is based on the alternative
formulation of Duval's algorithm by I \emph{et al.}. Given a set of
strings $\mathcal{P}$, let $Occ_{\mathcal{P}}(w)$ be the set of all
(starting) positions in $w$ corresponding to occurrences of the
strings in $\mathcal{P}$. We start with the following Lemmas:
\begin{lemma}\label{lemma:lf-tail}
  Let $w$ be a word and let $s = \max \{ i\ |\ w[i] > \cmin \}\cup \{
  - 1 \}$. Then, $CFL(w) = CFL(w[0 .. s]) CFL(\cmin^{(|w| - 1 -
    s)})$.
\end{lemma}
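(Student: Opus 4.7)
The plan is to establish that the Lyndon factorization of $w$ has a factor boundary immediately after position $s$, so that $CFL(w)$ splits as the concatenation of the factorizations of the prefix $w[0 .. s]$ and of the trailing $\cmin$-run $\cmin^{|w|-1-s}$. Granted such a split, the uniqueness part of Theorem 1 identifies the two halves with $CFL(w[0 .. s])$ and $CFL(\cmin^{|w|-1-s})$, which is the claim. The degenerate case $s = -1$ is trivial because then $w = \cmin^{|w|}$ and $w[0 .. s] = \varepsilon$.

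Assume $s \ge 0$ and write $CFL(w) = w_1, w_2, \ldots, w_m$ with the intervals $[a_i, b_i]$ as defined in the paper. The crux is to show that no factor straddles the gap between $s$ and $s+1$. Suppose for contradiction that some $w_i$ satisfies $a_i \le s < s+1 \le b_i$. Then $|w_i| \ge 2$, and every position strictly greater than $s$ holds $\cmin$, so $w_i[|w_i| - 1] = w[b_i] = \cmin$. Property 2 of Lyndon words then forces $w_i[0] < w_i[|w_i|-1] = \cmin$, contradicting the minimality of $\cmin$ in $\Sigma$. Hence every factor lies entirely in $[0, s]$ or entirely in $[s+1, |w|-1]$, so there is an index $j$ with $b_j = s$, giving $w_1 \cdots w_j = w[0 .. s]$ and $w_{j+1} \cdots w_m = \cmin^{|w|-1-s}$.

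Any contiguous sub-sequence of a nonincreasing sequence is nonincreasing, so $(w_1, \ldots, w_j)$ and $(w_{j+1}, \ldots, w_m)$ are valid Lyndon factorizations of $w[0 .. s]$ and $\cmin^{|w|-1-s}$ respectively; by the uniqueness asserted in Theorem 1 they coincide with $CFL(w[0 .. s])$ and $CFL(\cmin^{|w|-1-s})$, and the lemma follows by concatenation. The main obstacle is the no-straddle argument, which is the only place where a property of Lyndon words is actually exploited; once it is secured, the rest is a routine appeal to the uniqueness of $CFL$.
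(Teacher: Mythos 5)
Your proof is correct and follows essentially the same route as the paper's: both identify the factor covering position $s$, use the second property of Lyndon words (a factor of length at least $2$ cannot end in $\cmin$) to force the factor boundary at $s$, and then invoke uniqueness of the factorization. You merely spell out the trivial cases and the final uniqueness appeal more explicitly than the paper does.
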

\begin{proof}
  If $s = -1$ or $s = |w| - 1$ the Lemma plainly holds. Otherwise, Let
  $w_i$ be the factor in $CFL(w)$ such that $s\in [a_i, b_i]$. To
  prove the claim we have to show that $b_i = s$. Suppose by
  contradiction that $s < b_i$, which implies $|w_i|\ge 2$. Then,
  $w_i[|w_i| - 1] = \cmin$, which contradicts the second property of
  Lyndon words. \qed
\end{proof}
\begin{lemma}\label{lemma:lf-split}
  Let $w$ be a word such that $\cmin\cmin$ occurs in it and let $s =
  \min Occ_{\{\cmin\cmin\}}(w)$. Then, we have $CFL(w) = CFL(w[0 .. s
  - 1]) CFL(w[s .. |w| - 1])$.
\end{lemma}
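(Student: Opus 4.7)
The plan is to locate the factor in $CFL(w)$ that contains position $s$ and prove that this factor \emph{starts} at $s$. Once we know $s = a_i$ for some $i$, the claimed decomposition follows immediately from the uniqueness of the Lyndon factorization (Theorem~1): the two factorizations $CFL(w[0..s-1])$ and $CFL(w[s..|w|-1])$ concatenate into a sequence of Lyndon words that is nonincreasing (since the left half already ends at $w_{i-1}$ and the right half begins at $w_i$), and such a sequence must equal $CFL(w)$.

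So fix the factor $w_i$ with $s\in [a_i, b_i]$ and split into two cases. If $s = b_i$, then $w_i$ ends in $\cmin = w[s]$; if additionally $|w_i| \ge 2$, then $w_i[0] \ge \cmin = w_i[|w_i|-1]$ would violate property~2 of Lyndon words, forcing $|w_i| = 1$ and hence $a_i = s$. If instead $s < b_i$, then $s+1 \le b_i$ as well, so both $w_i[s-a_i]$ and $w_i[s-a_i+1]$ equal $\cmin$, and in particular $s - a_i < |w_i| - 1$. Consider the rotation $\rot(w_i, s-a_i)$, which starts with $\cmin\cmin$. Suppose for contradiction that $a_i < s$, so that $s - a_i \ge 1$. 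By the minimality of $s$, no occurrence of $\cmin\cmin$ can begin at a position $p < s$ in $w$; applied to $p = a_i$, this means $w_i$ itself does not start with $\cmin\cmin$. Hence either $w_i[0] > \cmin$, or $w_i[0] = \cmin$ and $w_i[1] > \cmin$, and in both situations a lexicographic comparison of the first two characters gives $\rot(w_i, s-a_i) < w_i$, contradicting the fact that $w_i$ is a Lyndon word. Therefore $a_i = s$.

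The main obstacle I anticipate is being careful with indices: one has to check that the rotation $\rot(w_i, s-a_i)$ really does expose the two consecutive $\cmin$'s at its front (which is where the bound $s+1 \le b_i$, giving $s - a_i \le |w_i| - 2$, is used), and one has to invoke minimality of $s$ at the position $a_i$ precisely to exclude the otherwise awkward subcase in which $w_i$ itself begins with $\cmin\cmin$. Everything else is a routine comparison against the defining property of Lyndon words.
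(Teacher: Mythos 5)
Your proof is correct and follows essentially the same route as the paper's: locate the factor $w_i$ containing $s$, handle $s=b_i$ via property~2 of Lyndon words, and handle $s<b_i$ by comparing $w_i$ with the rotation $\rot(w_i, s-a_i)$ beginning with $\cmin\cmin$, using minimality of $s$. The only differences are cosmetic --- you argue the rotation step in contrapositive form and spell out the appeal to uniqueness of the factorization, which the paper leaves implicit.
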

\begin{proof}
  Let $w_i$ be the factor in $CFL(w)$ such that $s\in [a_i, b_i]$. To
  prove the claim we have to show that $a_i = s$. Suppose by
  contradiction that $s > a_i$, which implies $|w_i|\ge 2$. If $s =
  b_i$ then $w_i[|w_i|-1] = \cmin$, which contradicts the second
  property of Lyndon words. Otherwise, since $w_i$ is a Lyndon word it
  must hold that $w_i < \rot(w_i, s - a_i)$. This implies at least
  that $w_i[0] = w_i[1] = \cmin$, which contradicts the hypothesis
  that $s$ is the smallest element in $Occ_{\{\cmin\cmin\}}(w)$.\qed
\end{proof}
\begin{lemma}\label{lemma:lf-skip}
  Let $w$ be a word such that $w[0] = w[1] = \cmin$ and $w[|w| -
  1]\neq \cmin$. Let $r$ be the smallest position in $w$ such that
  $w[r]\neq \cmin$. Note that $w[0 .. r - 1] = \cmin^r$. Let also
  $\mathcal{P} = \{ \cmin^r c\ |\ c \le w[r] \}$. Then we have
$$
b_1 = \min \{ s\in Occ_{\mathcal{P}}(w)\ |\ \suf{w}{s} < w \} \cup \{ |w| \} - 1 \,,
$$
where $b_1$ is the ending position of factor $w_1$.
\end{lemma}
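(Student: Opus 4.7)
The plan is to lean on Lemma~\ref{lemma:lf-lcp}, which characterizes $|w_1|$ as the smallest position $j$ with $\suf{w}{j} < w$ (taking this minimum to be $|w|$ when no such $j$ exists, in which case $w$ itself is a Lyndon word). Thus $b_1 = |w_1| - 1$, and the only difference from the right-hand side of Lemma~\ref{lemma:lf-skip} is that the minimum is restricted to positions in $Occ_{\mathcal{P}}(w)$. So it suffices to show that the minimizer $s^\star = \min\{j : \suf{w}{j} < w\}$, whenever it exists, already lies in $Occ_{\mathcal{P}}(w)$; when the set is empty, both sides equal $|w|$ and there is nothing to prove.

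To prove $s^\star \in Occ_{\mathcal{P}}(w)$ I would argue character by character. Since $w$ begins with $\cmin^r$ and $\suf{w}{s^\star} < w$, the first $r$ characters of $\suf{w}{s^\star}$ must each be $\le \cmin$, hence exactly $\cmin$. Writing $h = lcp(w, \suf{w}{s^\star})$, this yields $h \ge r$. The goal is then to show that position $s^\star + r$ is valid and that $w[s^\star + r] \le w[r]$, so that $w$ has an occurrence of $\cmin^r c$ with $c \le w[r]$ at position $s^\star$. There are two cases: if $h = r$, then $w[s^\star + r] < w[r]$ because $\suf{w}{s^\star} < w$ differs there; if $h > r$ (in particular if $\suf{w}{s^\star}$ is a proper prefix of $w$, so $h = |w| - s^\star$), then $w[s^\star + r] = w[r]$. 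Either way, the required pattern from $\mathcal{P}$ occurs at $s^\star$.

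The delicate step, and the main obstacle, is excluding the degenerate situations where $\suf{w}{s^\star}$ is too short to contain $r+1$ characters, i.e.\ $s^\star + r \ge |w|$. This is exactly where the hypothesis $w[|w|-1]\neq \cmin$ is used. If $s^\star > |w| - r$, then the character-by-character argument forces every symbol of $\suf{w}{s^\star}$ to be $\cmin$, giving $w[|w|-1] = \cmin$, a contradiction; and the boundary case $s^\star = |w| - r$ yields $\suf{w}{s^\star} = \cmin^r$, with the same contradiction. The same reasoning rules out $h = |w| - s^\star = r$ in the prefix subcase. Once these boundary situations are dispatched, the character comparison immediately gives $s^\star \in Occ_{\mathcal{P}}(w)$, completing the argument.
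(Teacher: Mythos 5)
Your proof is correct and takes essentially the same route as the paper's: both reduce the claim to Lemma~\ref{lemma:lf-lcp} and then argue, by comparing characters against the prefix $\cmin^r$, that any suffix smaller than $w$ must begin with a pattern of $\mathcal{P}$, using the hypothesis $w[|w|-1]\neq\cmin$ to rule out the all-$\cmin$ (too short) suffixes. The only, immaterial, difference is that the paper states the resulting identity as an equality of the two sets of positions, whereas you argue only about the minimizer.
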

\begin{proof}
  By Lemma~\ref{lemma:lf-lcp} we have that $b_1 = \min \{ s\ |\ w[s ..
  |w| - 1] < w\} - 1$. Since $w[0 .. r - 1] = \cmin^r$ and $|w|\ge r +
  1$, for any string $v$ such that $v < w$ we must have that either
  $v[0 .. r]\in\mathcal{P}$, if $|v|\ge r + 1$, or $v = \cmin^{|v|}$
  otherwise. Since $w[|w|-1] \neq \cmin$, the only position $s$ that
  satisfies $w[s .. |w| - 1] = \cmin^{|w| - s}$ is $|w|$,
  corresponding to the empty word. Hence,
$$
\{ s\ |\ w[s .. |w| - 1] < w \} = \{ s\in Occ_{\mathcal{P}}(w)\ |\ \suf{w}{s} < w \}\cup\{|w|\}
$$
\qed
\end{proof}
Based on these Lemmas, we can devise a faster factorization algorithm
for words containing runs of $\cmin$. The key idea is that, using
Lemma~\ref{lemma:lf-skip}, it is possible to skip symbols in the
computation of $b_1$, if a suitable string matching algorithm is used
to compute $Occ_{\mathcal{P}}(w)$.
W.l.o.g. we assume that the last symbol of $w$ is different from
$\cmin$. In the general case, by Lemma~\ref{lemma:lf-tail}, we can
reduce the factorization of $w$ to the one of its longest prefix with
last symbol different from $\cmin$, as the remaining suffix is a
concatenation of $\cmin$ symbols, whose factorization is a sequence of
factors equal to $\cmin$.
Suppose that $\cmin\cmin$ occurs in $w$. By Lemma~\ref{lemma:lf-split}
we can split the factorization of $w$ in $CFL(u)$ and $CFL(v)$ where
$uv = w$ and $|u| = \min Occ_{\{\cmin\cmin\}}(w)$. The factorization
of $CFL(u)$ can be computed using Duval's original algorithm.

Concerning $v$, let $r = \min \{ i\ |\ v[i]\neq \cmin \}$. By definition $v[0] = v[1] = \cmin$
and $v[|v|-1]\neq \cmin$, and we can apply Lemma~\ref{lemma:lf-skip}
on $v$ to find the ending position $s$ of the first factor in
$CFL(v)$. To this end, we
have to find the position $\min \{i\in Occ_{\mathcal{P}}(v)\ |\
\suf{v}{i} < v\}$, where $\mathcal{P} = \{ \cmin^r c\ |\ c \le v[r]
\}$. For this purpose, we can use any algorithm for multiple string
matching to iteratively compute $Occ_{\mathcal{P}}(v)$ until either a
position $i$ is found that satisfies $\suf{v}{i} < v$ or we reach the
end of the string. Let $h = lcp(v, \suf{v}{i})$ , for a given $i\in
Occ_{\mathcal{P}}(v)$. Observe that $h\ge r$ and, if $v < \suf{v}{i}$,
then, by Lemma~\ref{lemma:lf-lcp-skip}, we do not need to verify the
positions $i'\in Occ_{\mathcal{P}}(v)$ such that $i'\le i + h$.
Given
that all the patterns in $\mathcal{P}$ differ in the last symbol only,
we can express $\mathcal{P}$ more succinctly using a character class
for the last symbol and match this pattern using a string matching
algorithm that supports character classes, such as the algorithms
based on bit-parallelism.
In this respect, SBNDM2 \cite{DurianHPT10}, a variation of the BNDM
algorithm \cite{NavarroR00} is an ideal choice, as
it is sublinear on average. Instead of $\mathcal{P}$, it is naturally
possible to search for $\cmin^r$, but that solution is slower in
practice for small alphabets. Note that the same algorithm can also be
used to compute $\min Occ_{\cmin\cmin} (w)$ in the first phase.

Let $h = lcp(v, v[s .. |v| - 1])$ and $k = 1 + \lfloor h / s\rfloor$.
Based on Lemma~\ref{lemma:lf-lcp}, the algorithm then outputs $v[0 ..
s - 1]$ $k$ times and iteratively applies the above method on $v' =
v[sk .. |v| - 1]$. It is not hard to verify that, if $v'\neq
\varepsilon$, then $|v'|\ge r + 1$, $v'[0 .. r - 1] = \cmin$ and
$v'[|v'| - 1]\neq \cmin$, and so Lemma~\ref{lemma:lf-skip} can be used
on $v'$. The code of the algorithm is shown in
Figure~\ref{fig:duval-variant}. The computation of the value $r' =
\min \{ i\ |\ v'[i]\neq \cmin \}$ for $v'$ takes advantage of the fact
that $v'[0 .. r - 1] = \cmin$, so as to avoid useless comparisons.
If the the total time spent for the iteration over the sets
$Occ_{\mathcal{P}}(v)$ is $O(|w|)$, the full algorithm has also linear
time complexity in the worst case. To see why, it is enough to observe
that the positions $i$ for which the algorithm verifies if $v[i ..
  |v|-1] < v$ are a subset of the positions verified by the original
algorithm.

\setcounter{instr}{0}
\begin{figure}[t]
\begin{center}
\begin{tabular}{|rl|}
\hline
\multicolumn{2}{|l|}{\textsc{LF-rle}$(R)$}\\
\ninstr & $k\leftarrow 0$ \\
\ninstr & \textbf{while} $k < |R|$ \textbf{do} \\
\ninstr & \qquad $(m,q)\leftarrow \textsc{LF-rle-next}(R, k)$ \\
\ninstr & \qquad \textbf{for} $i\leftarrow 1$ \textbf{to} $q$ \textbf{do} \\
\ninstr & \qquad \qquad \textbf{output} $(k, k + m - 1)$ \\
\ninstr & \qquad \qquad $k\leftarrow k + m$ \\
\hline
\end{tabular}
\setcounter{instr}{0}
\begin{tabular}{|rl|}
\hline
\multicolumn{2}{|l|}{\textsc{LF-rle-next}$(R = \langle (c_1, l_1), \ldots, (c_{\rho},l_{\rho}) \rangle, k)$}\\
\ninstr & $i\leftarrow k$ \\
\ninstr & $j\leftarrow k + 1$ \\
\ninstr & \textbf{while} \textsc{true} \textbf{do} \\
\ninstr & \qquad \textbf{if} $i > k$ \textbf{and} $l_{j-1} < l_{i-1}$ \textbf{then} \\
\ninstr & \qquad \qquad $z\leftarrow 1$ \\
\ninstr & \qquad \textbf{else} $z\leftarrow 0$ \\
\ninstr & \qquad $s\leftarrow i - z$ \\
\ninstr & \qquad \textbf{if} $j = |R|$ \textbf{or} $c_j < c_s$ \textbf{or} \\
\ninstr & \qquad \qquad ($c_j = c_s$ \textbf{and} $l_j > l_s$ \textbf{and} $c_j < c_{s+1}$) \textbf{then} \\
\ninstr & \qquad \qquad \textbf{return} $(j - i, \lfloor (j-k-z) / (j-i)\rfloor)$ \\
\ninstr & \qquad \textbf{else} \\
\ninstr & \qquad \qquad \textbf{if} $c_j > c_s$ \textbf{or} $l_j > l_s$ \textbf{then} \\
\ninstr & \qquad \qquad \qquad $i\leftarrow k$ \\
\ninstr & \qquad \qquad \textbf{else} \\
\ninstr & \qquad \qquad \qquad $i\leftarrow i + 1$ \\
\ninstr & \qquad \qquad $j\leftarrow j + 1$ \\
\hline
\end{tabular}
\end{center}
\caption{The algorithm to compute the Lyndon factorization of a run-length encoded string.}
\label{fig:duval-rle}
\end{figure}

\section{Computing the Lyndon factorization of a run-length encoded string}

In this section we present an algorithm to compute the Lyndon
factorization of a string given in RLE form. The algorithm is based on
Duval's original algorithm and on a combinatorial property
between the Lyndon factorization of a string and its RLE, and has
$O(\rho)$-time and $O(1)$-space complexity, where $\rho$ is the length
of the RLE. We start with the following Lemma:

\begin{lemma}
  Let $w$ be a word over $\Sigma$ and let $w_1, w_2, \ldots, w_m$ be
  its Lyndon factorization. For any $1\le i\le |\textsc{rle}(w)|$, let
  $1\le j, k \le m$, $j \le k$, such that $a^{rle}_i\in [a_j, b_j]$
  and $b^{rle}_i\in [a_k, b_k]$. Then, either $j = k$ or $|w_j| =
  |w_k| = 1$.
\end{lemma}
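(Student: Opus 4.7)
The plan is to reduce to a boundary-by-boundary argument: at each factor boundary sitting strictly inside the run, the run character forces both adjacent factors to have length one.

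First I would handle the trivial case $j=k$ and then assume $j<k$. Writing $c=c_i$ for the run character and $[a^{rle}_i,b^{rle}_i]$ for its positions, I would check that for every $l$ with $j\le l<k$, both position $b_l$ and position $b_l+1=a_{l+1}$ lie in $[a^{rle}_i,b^{rle}_i]$. Indeed, $b_l\ge b_j\ge a^{rle}_i$ since $a^{rle}_i\in[a_j,b_j]$ and the $b_l$ are nondecreasing, while $a_{l+1}\le a_k\le b^{rle}_i$ since $b^{rle}_i\in[a_k,b_k]$. Thus $w[b_l]=w[a_{l+1}]=c$, i.e.\ $w_l$ ends with $c$ and $w_{l+1}$ starts with $c$.

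The key combinatorial step is the following claim: whenever a Lyndon factor $w_l$ ends with a symbol $c$ and the next factor $w_{l+1}$ starts with $c$, we must have $|w_l|=|w_{l+1}|=1$. For $w_l$, if $|w_l|\ge 2$, then the second Lyndon-word property gives $w_l[0]<w_l[|w_l|-1]=c=w_{l+1}[0]$, which immediately yields $w_l<w_{l+1}$, contradicting the nonincreasing chain of $CFL(w)$; hence $w_l=c$. For $w_{l+1}$, if $|w_{l+1}|\ge 2$, then $c=w_l$ is a proper prefix of $w_{l+1}$, so $w_l<w_{l+1}$, again a contradiction.

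Applying this claim to each boundary $l=j,j+1,\ldots,k-1$ chained together gives $|w_j|=|w_{j+1}|=\cdots=|w_k|=1$, and in particular $|w_j|=|w_k|=1$, which is the desired conclusion. The only real subtlety, and the one I would present most carefully, is the index bookkeeping that shows both $b_l$ and $b_l+1$ lie inside the run for every intermediate $l$; once that is in place, the Lyndon-word properties do all the work.
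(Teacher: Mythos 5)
Your proof is correct and uses essentially the same argument as the paper: the paper likewise observes that the boundary symbols $w_j[|w_j|-1]$ and $w_k[0]$ both equal the run character and then invokes the second Lyndon-word property together with $w_j\ge w_k$ to force a contradiction unless both factors have length one. The only cosmetic difference is that you chain the argument through every intermediate boundary $l=j,\ldots,k-1$ (which also yields $|w_l|=1$ for the factors in between, a fact the paper's corollary needs anyway), whereas the paper compares $w_j$ and $w_k$ directly.
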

\begin{proof}
  Suppose by contradiction that $j < k$ and either $|w_j| > 1$ or
  $|w_k| > 1$. By definition of $j,k$, we have $w_j \ge w_k$.
  Moreover, since both $[a_j, b_j]$ and $[a_k, b_k]$ overlap with
  $[a^{rle}_i, b^{rle}_i]$, we also have $w_j[|w_j|-1] = w_k[0]$. If
  $|w_j| > 1$, then, by definition of $w_j$, we have $w_j[0] <
  w_j[|w_j|-1] = w_k[0]$. Instead, if $|w_k| > 1$ and $|w_j| = 1$, we
  have that $w_j$ is a prefix of $w_k$. Hence, in both cases we obtain
  $w_j < w_k$, which is a contradiction.\qed
\end{proof}

The consequence of this Lemma is that a run of length $l$ in the RLE
is either contained in \emph{one} factor of the Lyndon factorization,
or it corresponds to $l$ unit-length factors. Formally:

\begin{corollary}
  Let $w$ be a word over $\Sigma$ and let $w_1, w_2, \ldots, w_m$ be
  its Lyndon factorization. Then, for any $1\le i\le
  |\textsc{rle}(w)|$, either there exists $w_j$ such that $[a^{rle}_i,
  b^{rle}_i]$ is contained in $[a_j, b_j]$ or there exist $l_i$
  factors $w_j, w_{j+1}, \ldots, w_{j+l_i-1}$ such that $|w_{j+k}| =
  1$ and $a_{j+k}\in [a^{rle}_i, b^{rle}_i]$, for $0\le k < l_i$.
\end{corollary}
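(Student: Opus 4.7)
The plan is to apply the preceding Lemma directly with a brief case analysis. Fix the run index $i$ and let $j, k$ be as in the Lemma, so $a^{rle}_i \in [a_j, b_j]$ and $b^{rle}_i \in [a_k, b_k]$ with $j \le k$.

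If $j = k$, the Lemma gives no further information but in this case both endpoints $a^{rle}_i$ and $b^{rle}_i$ lie in $[a_j, b_j]$, hence $[a^{rle}_i, b^{rle}_i] \subseteq [a_j, b_j]$, which is the first alternative of the corollary.

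If $j < k$, the Lemma forces $|w_j| = |w_k| = 1$, so $a_j = b_j = a^{rle}_i$ and $a_k = b_k = b^{rle}_i$. The intermediate factors $w_{j+1}, \ldots, w_{k-1}$ then cover exactly the positions in $[a^{rle}_i + 1, b^{rle}_i - 1]$, which lie inside the $i$-th run. Consequently each such $w_{j+t}$ is of the form $c_i^{|w_{j+t}|}$, and by the first property of Lyndon words (no $a^k$ with $k \ge 2$ is Lyndon) we must have $|w_{j+t}| = 1$. Thus every factor from $w_j$ through $w_k$ has length $1$, and starting position inside $[a^{rle}_i, b^{rle}_i]$, giving exactly $k - j + 1 = b^{rle}_i - a^{rle}_i + 1 = l_i$ such factors, which is the second alternative.

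There is no real obstacle here: the corollary is essentially a repackaging of the Lemma, and the only content to add is the observation that once the two extremal factors collapse to a single character inside the run, the intermediate factors are trapped inside the monochromatic block and therefore must each be a single copy of $c_i$ on pain of violating the Lyndon property.
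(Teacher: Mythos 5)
Your proof is correct and follows the route the paper intends: the corollary is stated there without proof as an immediate consequence of the preceding Lemma, and your case analysis ($j=k$ versus $j<k$, with the intermediate factors trapped inside the monochromatic run and hence forced to be unit-length) supplies exactly the missing details. The only nitpick is that the paper derives the fact that no $a^k$ with $k\ge 2$ is a Lyndon word from both listed properties rather than just the first, but this does not affect the argument.
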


This property can be exploited to obtain an algorithm for the Lyndon
factorization that runs in $O(\rho)$ time. First, we introduce the
following definition:
\begin{definition}
  A word $w$ is a LR word if it is either a Lyndon word or it
  is equal to $a^k$, for some $a\in\Sigma$, $k\ge 2$. The LR
  factorization of a word $w$ is the factorization in LR
  words obtained from the Lyndon factorization of $w$ by merging in a
  single factor the maximal sequences of unit-length factors with the
  same symbol.
\end{definition}

For example, the LR factorization of $cctgccaa$ is $\langle
cctg, cc, aa\rangle$. Observe that this factorization is a
(reversible) encoding of the Lyndon factorization. Moreover, in this
encoding it holds that each run in the RLE is contained in one factor
and thus the size of the LR factorization is $O(\rho)$. Let
$L'$ be the set of LR words. We now present the algorithm
$\textsc{LF-rle-next}(R, k)$ which computes, given an RLE
sequence $R$ and an integer $k$, the longest LR word in $R$
starting at position $k$. Analogously to Duval's algorithm, it reads
the RLE sequence from left to right maintaining two integers, $j$ and
$\ell$, which satisfy the following invariant:
\begin{equation}\label{eq:invariant}
\begin{array}{ll}
c_k^{l_k}\ldots c_{j-1}^{l_{j-1}}\in P'; \\
\ell =
\begin{cases}
|\textsc{rle}(\beta(c_k^{l_k}\ldots c_{j-1}^{l_{j-1}}))| & \text{ if } j - k > 1, \\
0 & \text{otherwise}.
\end{cases}
\end{array}
\end{equation}
The integer $j$, initialized to $k+1$, is the index of the next run to
read and is incremented at each iteration until either $j = |R|$ or
$c_k^{l_k}\ldots c_{j-1}^{l_{j-1}}\notin P'$. The integer $\ell$,
initialized to $0$, is the length in runs of the longest border of
$c_k^{l_k}\ldots c_{j-1}^{l_{j-1}}$, if $c_k^{l_k}\ldots
c_{j-1}^{l_{j-1}}$ spans at least two runs, and equal to $0$
otherwise. For example, in the case of the word $a b^2 a b^2 a b$ we
have $\beta(a b^2 a b^2 a b) = a b^2 a b$ and $\ell = 4$. Let $i = k +
\ell$. In general, if $\ell > 0$, we have
$$
\begin{array}{l}
l_{j-1}\le l_{i-1}, l_{k}\le l_{j-\ell}, \\
\beta(c_k^{l_k}\ldots c_{j-1}^{l_{j-1}}) = c_k^{l_k} c_{k+1}^{l_{k+1}} \ldots c_{i-2}^{l_{i-2}} c_{i-1}^{l_{j-1}} = c_{j-\ell}^{l_k} c_{j-\ell+1}^{l_{j-\ell+1}} \ldots c_{j-2}^{l_{j-2}} c_{j-1}^{l_{j-1}}. \\
\end{array}
$$
Note that the longest border may not fully cover the last (first) run of the corresponding prefix (suffix). Such the case is for example for the word $ab^2 a^2 b$.
However, since $c_k^{l_k}\ldots c_{j-1}^{l_{j-1}}\in P'$ it must hold that $l_{j-\ell} = l_k$, i.e., the first run of the suffix is fully covered.
Let
$$
z =
\begin{cases}
  1 & \text{ if } \ell > 0 \wedge l_{j-1} < l_{i-1}, \\
  0 & otherwise.
\end{cases}
$$
Informally, the integer $z$ is equal to $1$ if the longest border of $c_k^{l_k}\ldots c_{j-1}^{l_{j-1}}$ does not fully cover the run $(c_{i-1},l_{i-1})$.
By~\ref{eq:invariant} we have that $c_k^{l_k} \ldots
c_{j-1}^{l_{j-1}}$ can be written as $(u v)^q u$, where
$$
\begin{array}{l}
q = \lfloor \frac{j-k-z}{j-i}\rfloor, r = z + (j-k-z)\mod (j-i), \\
u = c_{j-r}^{l_{j-r}}\ldots c_{j-1}^{l_{j-1}}, uv = c_k^{l_k}\ldots c_{j-\ell-1}^{l_{j-\ell-1}} = c_{i-r}^{l_{i-r}}\ldots c_{j-r-1}^{l_{j-r-1}}, \\
u v\in L' \\
\end{array}
$$

For example, in the case of the word $a b^2 a b^2 a b$, for $k = 0$, we have $j =
6, i = 4, q = 2, r = 2$.
The algorithm is based on the following Lemma:
\begin{lemma}\label{lemma:lf-rle}
  Let $j, \ell$ be such that invariant~\ref{eq:invariant} holds and
  let $s = i-z$. Then, we have the following cases:
\begin{enumerate}
\item If $c_j < c_s$ then $c_k^{l_k} \ldots c_{j}^{l_j}\notin P'$;
\item If $c_j > c_s$ then $c_k^{l_k} \ldots c_j^{l_j}\in L'$ and~\ref{eq:invariant} holds for $j + 1$, $\ell' = 0$;
\end{enumerate}
Moreover, if $z = 0$, we also have:
\begin{enumerate}
\setcounter{enumi}{2}
\item If $c_j = c_i$ and $l_j \le l_i$,  then $c_k^{l_k} \ldots c_{j}^{l_j}\in P'$ and~\ref{eq:invariant} holds for $j + 1$, $\ell' = \ell + 1$;
\item If $c_j = c_i$ and $l_j > l_i$, either $c_j < c_{i+1}$ and
  $c_k^{l_k} \ldots c_{j}^{l_j}\notin P'$ or $c_j > c_{i+1}$,
  $c_k^{l_k} \ldots c_j^{l_j}\in L'$ and~\ref{eq:invariant} holds for $j + 1$, $\ell' = 0$.
\end{enumerate}
\end{lemma}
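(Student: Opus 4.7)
The plan is to derive this lemma from repeated applications of Duval's Lemma~\ref{lemma:lf-duval3}, viewing the append of the whole run $c_j^{l_j}$ to $W = c_k^{l_k}\ldots c_{j-1}^{l_{j-1}}$ as $l_j$ successive single-character appends. The first step is to identify the ``next period character'' against which an appended symbol must be compared. Writing $W = (uv)^q u$ with $uv\in L$ as in the setup preceding the lemma, this character is $(uv)[|u|]$, i.e., the first symbol of $v$. In RLE coordinates the position $|u|$ sits strictly inside run $i-1$ when $z=1$ (the border does not fully cover run $i-1$, leaving $l_{i-1}-l_{j-1}>0$ copies of $c_{i-1}$ at the start of $v$) and exactly at the start of run $i$ when $z=0$ (either $\ell=0$, giving $i=k$, or $\ell>0$ with $l_{j-1}=l_{i-1}$). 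In all subcases the character at that position is precisely $c_{i-z}=c_s$.

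Cases 1 and 2 then follow immediately. For case 1 ($c_j<c_s$), the first appended symbol triggers Lemma~\ref{lemma:lf-duval3}(1), so $Wc_j\notin P'$ and hence $Wc_j^{l_j}\notin P'$. For case 2 ($c_j>c_s$), Lemma~\ref{lemma:lf-duval3}(2) gives $Wc_j\in L$; I then extend to the whole run by induction on $m\le l_j$: once $Wc_j^m\in L$ with $m\ge 1$, its border is empty, the period equals the whole word, and the next period symbol is $(Wc_j^m)[0]=c_k$. Since the first symbol of a Lyndon word is a minimum of the word, $c_k\le c_s<c_j$, so Lemma~\ref{lemma:lf-duval3}(2) applies again and $Wc_j^{m+1}\in L$. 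Thus $Wc_j^{l_j}\in L$, and $\ell'=0$ since Lyndon words have empty borders.

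Cases 3 and 4 require $z=0$ because both examine the run boundary just after the currently matched prefix of $v$, which is a meaningful boundary only when $z=0$. In case 3 ($c_j=c_i$, $l_j\le l_i$), note that $\ell>0$ is forced: $\ell=0$ would give $i=k$ and $c_j=c_i=c_{j-1}$, contradicting $c_{j-1}\ne c_j$ in the RLE. With $l_{j-1}=l_{i-1}$ the first $l_i$ period symbols of $v$ form the entire run $c_i^{l_i}$, so appending $l_j\le l_i$ copies of $c_j=c_i$ matches symbol-by-symbol; Lemma~\ref{lemma:lf-duval3}(3) keeps each intermediate word in $P'\setminus L$ and extends the border by one character per step, which in run coordinates grows the border from $\ell$ runs to $\ell+1$ runs (the fresh run $c_j^{l_j}$ becomes the new tail of the border), so $\ell'=\ell+1$. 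Case 4 ($c_j=c_i$, $l_j>l_i$) proceeds identically for the first $l_i$ appends and then at the $(l_i+1)$-th append the period symbol advances to $c_{i+1}$; the two possibilities $c_j<c_{i+1}$ and $c_j>c_{i+1}$ yield the two branches of the statement via Lemma~\ref{lemma:lf-duval3}(1) and Lemma~\ref{lemma:lf-duval3}(2), and $c_j=c_{i+1}$ is ruled out by the RLE condition $c_i\ne c_{i+1}$ combined with $c_j=c_i$. The tail of case 4's $L$-branch is then handled by the same induction as in case 2.

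The hard part is the bookkeeping in cases 3 and 4: showing that the identification $c_s=c_{i-z}$ really aligns Duval's symbol-level reasoning with the RLE step, and that the border grows by exactly one full run in case 3. Once that alignment is pinned down, the rest is a direct translation of Lemma~\ref{lemma:lf-duval3}.
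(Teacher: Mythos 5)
Your proof takes the same route as the paper's: identify the comparison character of Lemma~\ref{lemma:lf-duval3} as $v[0]=c_{i-z}=c_s$ and then apply that lemma once per appended character, run by run, reading off the four cases. In fact you supply more detail than the paper does for the step it dismisses as ``easy to verify'', namely that the relevant proposition persists for all $m\le l_j$: your observation that once $c_k^{l_k}\ldots c_{j-1}^{l_{j-1}}c_j^m\in L$ the word has empty border, so the next comparison character is its first symbol $c_k\le c_s<c_j$, is a clean way to close that induction. Your handling of cases 1, 2 and 4 is sound and matches the paper's argument.

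The one error is the parenthetical claim in case 3 that ``$\ell>0$ is forced''. This is false: $\ell=0$ does not imply $j=k+1$, only that the prefix read so far is a Lyndon word with empty border, so $c_i=c_k$ need not coincide with $c_{j-1}$. For instance, after reading the runs $(a,2),(b,1)$ (so the current prefix is $a^2b\in L$, $\ell=0$, $i=k$), a next run $(a,1)$ puts you squarely in case 3 with $\ell=0$ and no conflict with the RLE condition $c_{j-1}\neq c_j$. Fortunately nothing essential is lost: when $\ell=0$ one has $u=\varepsilon$ and $v$ equal to the whole current prefix, whose first run is $c_k^{l_k}=c_i^{l_i}$, so your symbol-by-symbol matching argument and the border count $\ell'=\ell+1=1$ go through unchanged. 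You should delete the claim and treat $\ell=0$ as a (trivial) subcase rather than excluding it.
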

\begin{proof}
  The idea is the following: we apply Lemma~\ref{lemma:lf-duval3} with the word $(u v)^q
  u$ as defined above and symbol $c_j$. Observe that $c_j$ is compared
  with symbol $v[0]$, which is equal to $c_{k+r-1} = c_{i-1}$ if $z =
  1$ and to $c_{k+r} = c_i$ otherwise.

  First note that, if $z = 1$, $c_j \neq c_{i-1}$, since otherwise we
  would have $c_{j-1} = c_{i-1} = c_j$. In the first three cases, we
  obtain the first, second and third proposition of
  Lemma~\ref{lemma:lf-duval3}, respectively, for the word $c_k^{l_k}
  \ldots c_{j-1}^{l_{j-1}}c_j$. Independently of the derived
  proposition, it is easy to verify that the same proposition also
  holds for $c_k^{l_k}\ldots c_{j-1}^{l_{j-1}}c_j^m$, $m\le l_j$.
  Consider now the fourth case. By a similar reasoning, we have that
  the third proposition of Lemma~\ref{lemma:lf-duval3} holds for
  $c_k^{l_k}\ldots c_j^{l_i}$. If we then apply
  Lemma~\ref{lemma:lf-duval3} to $c_k^{l_k}\ldots c_j^{l_i}$ and
  $c_j$, $c_j$ is compared to $c_{i+1}$ and we must have $c_j \neq
  c_{i+1}$ as otherwise $c_i = c_j = c_{i+1}$. Hence, either the first
  (if $c_j < c_{i+1}$) or the second (if $c_j > c_{i+1}$) proposition
  of Lemma~\ref{lemma:lf-duval3} must hold for the word
  $c_k^{l_k}\ldots c_j^{l_i+1}$.\qed
\end{proof}

We prove by induction that invariant~\ref{eq:invariant} is maintained.
At the beginning, the variables $j$ and $\ell$ are initialized to
$k+1$ and $0$, respectively, so the base case trivially holds. Suppose
that the invariant holds for $j,\ell$. Then, by
Lemma~\ref{lemma:lf-rle}, either $c_k^{l_k} \ldots c_{j}^{l_{j}}\notin
P'$ or it follows that the invariant also holds for $j+1,\ell'$, where
$\ell'$ is equal to $\ell+1$, if $z = 0$, $c_j = c_i$ and $l_j\le
l_i$, and to $0$ otherwise. When $c_k^{l_k} \ldots c_{j}^{l_{j}}\notin
P'$ the algorithm returns the pair $(j - i, q)$, i.e., the length of
$uv$ and the corresponding exponent. Based on
Lemma~\ref{lemma:lf-duval1}, the factorization of $R$ can then be
computed by iteratively calling \textsc{LF-rle-next}. When a given
call to \textsc{LF-rle-next} returns, the factorization algorithm
outputs the $q$ factors $uv$ starting at positions $k$, $k + (j - i)$,
\ldots, $k + (q - 1) (j - i)$ and restarts the factorization at
position $k + q (j - i)$. The code of the algorithm is shown in
Figure~\ref{fig:duval-rle}. We now prove that the algorithm runs in
$O(\rho)$ time. First, observe that, by definition of LR
factorization, the for loop at line $4$ is executed $O(\rho)$ times.
Suppose that the number of iterations of the while loop at line $2$ is
$n$ and let $k_1, k_2, \ldots, k_{n+1}$ be the corresponding values of
$k$, with $k_1 = 0$ and $k_{n+1} = |R|$. We now show that the $s$-th
call to $\textsc{LF-rle-next}$ performs less than $2(k_{s+1} - k_s)$
iterations, which will yield $O(\rho)$ number of iterations in total.
This analysis is analogous to the one used by Duval. Suppose that
$i'$, $j'$ and $z'$ are the values of $i$, $j$ and $z$ at the end of
the $s$-th call to $\textsc{LF-rle-next}$. The number of iterations
performed during this call is equal to $j' - k_s$. We have $k_{s+1} =
k_s + q(j' - i')$, where $q = \lfloor \frac{j' - k_s - z}{j -
  i'}\rfloor$, which implies $j' - k_s < 2(k_{s+1} - k_s) + 1$, since,
for any positive integers $x, y$, $x < 2\lfloor x/y\rfloor y$ holds.

\section{Experiments with LF-Skip}

The experiments were run on MacBook Pro with the 2.4 GHz Intel Core 2 Duo processor and 2 GB memory. Programs were written in the C programming language and compiled with the gcc compiler (4.8.2) using the -O3 optimization level.

We tested the LF-skip algorithm and Duval's algorithm with various texts. With the protein sequence of the Saccharomyces cerevisiae genome (3 MB), LF-skip gave a speed-up of 3.5 times over Duval's algorithm. Table \ref{table:su} shows the speed-ups for random texts of 5 MB with various alphabets sizes. With longer texts, speed-ups were larger. For example, the speed-up for the 50 MB DNA text (without newlines) from the Pizza\&Chili Corpus\footnote{http://pizzachili.dcc.uchile.cl/} was 14.6 times.

\begin{table}[ht]
\caption{Speed-up of LF-skip with various alphabet sizes in a random text.}
\centering
\begin{tabular}{  c | c   }
\hline\hline 
$|\Sigma|$ & Speed-up \\ 
\hline 
2&9.0\\
3&7.7\\
4&7.2\\
5&6.1\\
6&4.8\\
8&4.3\\
10&3.5\\
12&3.4\\
15&2.4\\
20&2.5\\
25&2.2\\
30&1.9\\
\hline
\end{tabular}
\label{table:su} 
\end{table}

We made also some tests with texts of natural language. Because runs are very short in natural language, the benefit of LF-skip is marginal. We even tried alphabet transformations in order to vary the smallest character of the text, but that did not help.

\section{Conclusions}

In this paper we have presented two variations of Duval's algorithm
for computing the Lyndon factorization of a string. The first
algorithm was designed for the case of small alphabets and is able to
skip a significant portion of the characters of the string for strings
containing runs of the smallest character in the alphabet.
Experimental results show that the algorithm is considerably faster
than Duval's original algorithm. The second algorithm is for strings
compressed with run-length encoding and computes the Lyndon
factorization of a run-length encoded string of length $\rho$ in
$O(\rho)$ time and constant space.

\bibliographystyle{abbrv}
\bibliography{lyndon}

\begin{thebibliography}{10}

\bibitem{ApostolicoC95}
A.~Apostolico and M.~Crochemore.
\newblock Fast parallel {L}yndon factorization with applications.
\newblock {\em Mathematical Systems Theory}, 28(2):89--108, 1995.

\bibitem{ChenFL58}
K.~T. Chen, R.~H. Fox, and R.~C. Lyndon.
\newblock Free differential calculus. {IV}. {T}he quotient groups of the lower
  central series.
\newblock {\em Annals of Mathematics}, 68(1):81--95, 1958.

\bibitem{Duval83}
J.-P. Duval.
\newblock Factorizing words over an ordered alphabet.
\newblock {\em J. Algorithms}, 4(4):363--381, 1983.

\bibitem{DBLP:journals/corr/abs-1201-3077}
J.~Y. Gil and D.~A. Scott.
\newblock A bijective string sorting transform.
\newblock {\em CoRR}, abs/1201.3077, 2012.

\bibitem{spire/INIBT13}
T.~I, Y.~Nakashima, S.~Inenaga, H.~Bannai, and M.~Takeda.
\newblock Faster lyndon factorization algorithms for {SLP} and {LZ78}
  compressed text.
\newblock In O.~Kurland, M.~Lewenstein, and E.~Porat, editors, {\em SPIRE},
  volume 8214 of {\em Lecture Notes in Computer Science}, pages 174--185.
  Springer, 2013.

\bibitem{Kufleitner09}
M.~Kuf{}leitner.
\newblock On bijective variants of the {B}urrows-{W}heeler transform.
\newblock In J.~Holub and J.~Zd{\'a}rek, editors, {\em Proceedings of the
  Prague Stringology Conference 2009}, pages 65--79, 2009.

\bibitem{cow}
M.~Lothaire.
\newblock {\em Combinatorics on Words}.
\newblock Cambridge Mathematical Library. Cambridge University Press, 1997.

\bibitem{MantaciRRS13}
S.~Mantaci, A.~Restivo, G.~Rosone, and M.~Sciortino.
\newblock Sorting suffixes of a text via its {L}yndon factorization.
\newblock In J.~Holub and J.~Zd{\'a}rek, editors, {\em Proceedings of the
  Prague Stringology Conference 2013}, pages 119--127, 2013.

\bibitem{NavarroR00}
G.~Navarro and M.~Raffinot.
\newblock Fast and flexible string matching by combining bit-parallelism and
  suffix automata.
\newblock {\em ACM Journal of Experimental Algorithmics}, 5, 2000.

\bibitem{RohCIP08}
K.~Roh, M.~Crochemore, C.~S. Iliopoulos, and K.~Park.
\newblock External memory algorithms for string problems.
\newblock {\em Fundam. Inform.}, 84(1):17--32, 2008.

\bibitem{DurianHPT10}
B.~\v{D}urian, J.~Holub, H.~Peltola, and J.~Tarhio.
\newblock Improving practical exact string matching.
\newblock {\em Inf. Process. Lett.}, 110(4):148--152, 2010.

\end{thebibliography}

\end{document}